\documentclass[12pt,a4paper]{article}

\usepackage[utf8]{inputenc}
\usepackage[T1]{fontenc}
\usepackage{amsmath,amssymb,amsthm}
\usepackage{geometry}
\usepackage{algorithm}
\usepackage{algorithmic}
\usepackage{graphicx}
\usepackage{hyperref}
\usepackage{url}
\usepackage{booktabs}
\usepackage{tabularx}
\usepackage{xcolor}
\usepackage{caption}
\usepackage{float}

\newtheorem{definition}{Definition}
\newtheorem{theorem}{Theorem}
\newtheorem{lemma}{Lemma}
\newtheorem{corollary}{Corollary}
\newtheorem{remark}{Remark}

\newcommand{\MEV}{\mathsf{MEV}}
\newcommand{\REV}{\mathsf{REV}}

\newcommand{\VDF}{\mathsf{VDF}}
\newcommand{\idcom}{\mathsf{idcom}}

\newcommand{\Order}{\mathsf{Order}}
\newcommand{\Sign}{\mathsf{Sign}}

\newcommand{\negl}{\mathsf{negl}}
\newcommand{\Adv}{\mathcal{A}}

\title{MEV-ACE: Identity-Authenticated Fair Ordering\\for Proposer-Controlled MEV Mitigation}

\author{
  \texttt{Jian Sheng Wang} \\
  ACE Labs\\
  \texttt{jason@acechain.io}
}
\date{April 8, 2026}

\begin{document}
    \maketitle

    \begin{abstract}
        Maximal Extractable Value (MEV) remains a structural threat to blockchain fairness because the block producer can often observe pending transactions and unilaterally decide their ordering or inclusion. Existing mitigations hide transaction contents or outsource ordering, but they typically leave two gaps unresolved: commitments are not authenticated by slashable identities, and inclusion obligations are not backed by transferable evidence that other validators can verify.

        This paper presents \textbf{MEV-ACE}, a fair-ordering protocol for \emph{proposer-controlled ordering MEV}. MEV-ACE combines three mechanisms: (1)~\emph{registered economic identities}, whose authentication keys are deterministically derived from the ACE-GF framework and bonded on-chain; (2)~\emph{authenticated commit/open messages with validator receipt thresholds}, which make admissibility and inclusion obligations independently auditable; and (3)~\emph{verifiable-delay randomness}, which determines transaction order only after the admissible commitment set is fixed.

        We formalize the protocol in a $3f+1$ validator model with threshold receipts and show three properties under standard assumptions: \emph{order-unpredictability} once the admissible set is locked, \emph{commitment authenticity} under EUF-CMA security of the authentication signature scheme, and \emph{accountable inclusion} for transactions that obtain threshold commit and open receipts. Under these conditions, and when producer/user bonds exceed the one-slot gain from invalid execution or selective non-opening, MEV-ACE removes unilateral proposer discretion over front-running, sandwiching, and censorship against admitted transactions. The protocol remains single-slot in structure, requires no threshold decryption committee, and is compatible with post-quantum signatures such as ML-DSA-44.

        \par\medskip
        \noindent\textbf{Keywords:} MEV, fair ordering, front-running resistance, blockchain, ACE-GF, VDF, identity-bound commitment
    \end{abstract}

    % ============================================================
    \section{Introduction}
    \label{sec:intro}
    % ============================================================

    Maximal Extractable Value (MEV) refers to the profit a block producer can extract by manipulating the ordering, inclusion, or censorship of transactions within a block~\cite{daian2020flash}. Large-scale empirical studies and measurement efforts have documented substantial extracted value on Ethereum and related DeFi ecosystems~\cite{flashbots-explore,qin2022quantifying}, imposing a regressive tax on ordinary users while concentrating profit among sophisticated searchers and cooperating validators.

    The MEV problem is structural: in conventional blockchains, the block producer observes pending transactions in cleartext and has unilateral discretion over their ordering. This asymmetry enables three canonical attack classes:

    \begin{itemize}
        \item \textbf{Front-running}: The producer inserts a transaction before a target transaction to capture a price movement.
        \item \textbf{Sandwich attacks}: The producer brackets a target transaction with a buy-before and sell-after pair, extracting the slippage.
        \item \textbf{Censorship}: The producer selectively excludes transactions to disadvantage specific users or benefit competing transactions.
    \end{itemize}

    We collectively term these \emph{ordering-based MEV}, distinguished from \emph{information-based MEV} (e.g., cross-domain arbitrage from public oracle updates) which does not depend on the producer's ordering power.

    \subsection{Limitations of Existing Approaches}

    Prior work has pursued four main strategies, each with fundamental limitations:

    \paragraph{Commit-reveal schemes~\cite{canidio2023commit}.} Users hide transaction contents during an initial commitment phase and reveal later. This reduces early information leakage, but by itself does not authenticate who is allowed to create admissible commitments, nor does it provide a portable proof that an omitted commitment had to be included.

    \paragraph{Threshold encryption~\cite{shutter2023}.} Transactions are encrypted under a threshold key and decrypted only after ordering is fixed. This requires a trusted decryption committee, introduces at least one full round-trip of latency for threshold decryption, and remains vulnerable to producer--committee collusion.

    \paragraph{Fair-ordering services~\cite{chainlink-fss}.} Decentralized oracles provide ordering attestations. These introduce an external trust assumption and oracle-layer latency, and they shift, rather than remove, the requirement that some outside party honestly witness and enforce ordering.

    \paragraph{Proposer-builder separation (PBS)~\cite{flashbots-mev-boost}.} Separates block building from block proposing. This redistributes MEV rather than eliminating it: the builder still extracts ordering-based MEV, and the proposer receives a share via auction.

    \subsection{The Authenticated-Admissibility Gap}

    We identify a structural weakness shared by these approaches: \emph{the protocol does not natively authenticate admissible commitments by registered, slashable identities, nor does it require inclusion evidence that can be verified by any validator}. This enables the following attacks:

    \begin{enumerate}
        \item \textbf{Commitment stuffing}: The producer generates many low-cost commitments to dilute the ordering pool and improve the expected placement of its own trades.
        \item \textbf{Selective non-opening}: The producer admits many candidate transactions and reveals only the subset whose realized positions are profitable.
        \item \textbf{Unprovable omission}: Honest users may have broadcast valid commitments or openings, but other validators lack a transferable proof that the producer was obligated to include them.
    \end{enumerate}

    These attacks are especially problematic in systems where identity creation is cheap, commitment admission is not authenticated, and omission can only be detected by local observation rather than by globally verifiable receipts.

    \subsection{Our Contributions}

    This paper makes the following contributions:

    \begin{enumerate}
        \item \textbf{Formal model of admissible fair ordering.} We define proposer-controlled ordering MEV in a model where commitments and openings become mandatory only after they collect threshold validator receipts, and formalize three security properties: order-unpredictability, commitment authenticity, and accountable inclusion (Section~\ref{sec:model}).

        \item \textbf{The MEV-ACE protocol.} We present a single-slot protocol that combines ACE-GF-derived authentication keys, bonded identity registration, threshold receipt certificates for commit/open phases, and VDF-based delayed randomness (Section~\ref{sec:protocol}).

        \item \textbf{Security analysis.} We prove that MEV-ACE satisfies all three properties under the sequentiality of the VDF, EUF-CMA security of the authentication signature scheme, collision resistance of SHA-256, and standard BFT delivery assumptions for omission proofs. We further derive explicit economic conditions under which honest execution is the producer's best response (Section~\ref{sec:security}).

        \item \textbf{Deployment guidance.} We provide a parameterized latency and communication analysis that makes the timing assumptions explicit, instead of conflating VDF security parameters with wall-clock runtime (Section~\ref{sec:performance}).

        \item \textbf{Identity-authenticated ordering architecture.} We show how ACE-GF can be used as an identity-native key-derivation layer for MEV protection without requiring seed storage or threshold decryption, while keeping the actual security boundary at the protocol layer: registered keys, bonded participation, and independently verifiable receipts.
    \end{enumerate}

    \subsection{Scope and Non-Goals}

    MEV-ACE targets \emph{proposer-controlled ordering MEV} on a single chain. Information-based MEV---such as back-running public oracle price updates or cross-domain statistical arbitrage---is \emph{not} in scope. These forms of MEV exploit publicly available information and do not disappear merely because proposer ordering discretion is reduced; they require orthogonal mitigations such as batch auctions, delayed disclosure, or private execution.

    % ============================================================
    \section{Preliminaries}
    \label{sec:prelim}
    % ============================================================

    \subsection{ACE-GF Identity Framework}

    ACE-GF (Atomic Cryptographic Entity Generative Framework)~\cite{acegf} is a deterministic key-derivation framework that derives application-specific cryptographic material from an identity root called the Root Entropy Value ($\REV$). For MEV-ACE, the relevant properties are:

    \begin{itemize}
        \item \textbf{Deterministic derivation.} Application keys are derived under explicit context strings, allowing the protocol to dedicate a distinct authentication key to MEV ordering without reusing keys from unrelated applications.
        \item \textbf{Context isolation.} Keys derived under distinct contexts are computationally independent under the pseudorandomness of the underlying KDF.
        \item \textbf{Operational simplicity.} Honest users manage one identity root while still obtaining application-specific signing keys, which lowers operational overhead without weakening protocol-level accountability.
        \item \textbf{Post-quantum compatibility.} ACE-GF can instantiate its authentication context with ML-DSA-44 (NIST FIPS~204), allowing the ordering protocol to inherit post-quantum signature support.
    \end{itemize}

    \begin{definition}[MEV Authentication Key]
    For a user with identity root $\REV$, let
    $$
    (sk^{\mathsf{auth}}, vk^{\mathsf{auth}})
    =
    \mathsf{DeriveSigKey}_{\mathsf{mev}}(\REV)
    $$
    denote the signature key pair derived for MEV-ACE authentication.
    \end{definition}

    \begin{definition}[Registered Identity Commitment]
    \label{def:idcom}
    For a user with authentication verification key $vk^{\mathsf{auth}}$, the public identity commitment used by MEV-ACE is
    $$
    \idcom = H(vk^{\mathsf{auth}}),
    $$
    where $H$ is SHA-256. A registered economic identity is the on-chain tuple $(\idcom, vk^{\mathsf{auth}}, D_{\mathsf{stake}})$, where $D_{\mathsf{stake}}$ is a slashable bond locked for protocol participation.
    \end{definition}

    \begin{remark}
    In MEV-ACE, protocol-level Sybil resistance does \emph{not} rely on local wallet setup cost. The enforceable cost comes from the registered bond $D_{\mathsf{stake}}$, the per-slot commitment quota $\ell$, and slashing for invalid behavior or selective non-opening. ACE-GF is used to make authenticated key derivation convenient and context-isolated, not to serve as the sole anti-Sybil mechanism.
    \end{remark}

    \subsection{Verifiable Delay Functions}

    A Verifiable Delay Function (VDF)~\cite{boneh2018vdf} is a function $f: \mathcal{X} \to \mathcal{Y}$ that requires $T$ sequential computation steps to evaluate but whose output can be verified in $O(\log T)$ time.

    \begin{definition}[VDF]
    A VDF scheme consists of three algorithms:
    \begin{itemize}
        \item $\mathsf{Setup}(1^\lambda, T) \to \mathsf{pp}$: generates public parameters for security parameter $\lambda$ and delay parameter $T$.
        \item $\mathsf{Eval}(\mathsf{pp}, x) \to (y, \pi)$: computes output $y$ with proof $\pi$ in exactly $T$ sequential steps.
        \item $\mathsf{Verify}(\mathsf{pp}, x, y, \pi) \to \{0,1\}$: verifies the output in $O(\log T)$ time.
    \end{itemize}
    \end{definition}

    We require the \emph{sequential computation assumption}: no adversary with $\mathsf{poly}(\lambda)$ parallel processors can evaluate the VDF significantly faster than $T$ sequential steps.

    \subsection{Notation}

    Throughout this paper: $H(\cdot)$ denotes SHA-256; $\|$ denotes concatenation; $[n]$ denotes the set $\{1, \ldots, n\}$; $\lambda$ is the security parameter; $\negl(\lambda)$ denotes a negligible function in $\lambda$; $\pi_\sigma$ denotes a permutation derived from seed $\sigma$; $R_i^C$ and $R_i^O$ denote threshold receipt bundles for the commit and open phases, respectively.

    % ============================================================
    \section{Formal Model}
    \label{sec:model}
    % ============================================================

    \subsection{System Model}

    We consider a blockchain with the following participants:
    \begin{itemize}
        \item \textbf{Users} $\mathcal{U} = \{u_1, \ldots, u_n\}$: each user $u_i$ controls a registered identity $(\idcom_i, vk_i^{\mathsf{auth}}, D_{\mathsf{stake}})$ and submits transactions.
        \item \textbf{Block producer} $\mathcal{P}$: the entity responsible for assembling the current block. $\mathcal{P}$ may be strategic, rational, or Byzantine, and may also control a subset of registered user identities.
        \item \textbf{Validators} $\mathcal{V} = \{v_1, \ldots, v_{3f+1}\}$: validators verify signatures, issue commit/open receipts, and reject blocks that omit transactions proven to be mandatory.
    \end{itemize}

    Time is divided into discrete \emph{slots}. Each slot has a designated block producer selected by the consensus protocol. The block producer for slot $s$ is denoted $\mathcal{P}_s$.

    A commitment or opening becomes mandatory only after it gathers a threshold receipt bundle from validators. We use $q_c$ and $q_o$ for the commit and open thresholds, respectively, and assume $q_c, q_o \ge 2f+1$.

    \begin{definition}[Admissible Commitment]
    A commitment for slot $s$ is a tuple $(\idcom_i, c_i, s, \sigma_i^C)$. It is \emph{admissible} if:
    \begin{enumerate}
        \item $(\idcom_i, vk_i^{\mathsf{auth}}, D_{\mathsf{stake}})$ is registered and active;
        \item $\sigma_i^C$ is a valid signature under $vk_i^{\mathsf{auth}}$ on message $(\texttt{"commit"}, \idcom_i, c_i, s)$;
        \item the identity has not exceeded the per-slot quota $\ell$; and
        \item the commitment carries a validator receipt bundle $R_i^C$ containing at least $q_c$ valid validator signatures on the same message before the slot's commit cutoff.
    \end{enumerate}
    \end{definition}

    \begin{definition}[Executable Opening]
    Let $(\idcom_i, c_i, s, \sigma_i^C)$ be an admissible commitment. An opening $(tx_i, r_i, \idcom_i, s)$ is \emph{executable} if:
    \begin{enumerate}
        \item $c_i = H(tx_i \| r_i \| \idcom_i \| s)$;
        \item it carries an opening receipt bundle $R_i^O$ with at least $q_o$ valid validator signatures on message $(\texttt{"open"}, \idcom_i, c_i, s)$ before the opening cutoff; and
        \item the corresponding identity remains active and slashable.
    \end{enumerate}
    \end{definition}

    \subsection{Ordering-Based MEV: Formal Definition}

    \begin{definition}[Transaction Ordering Function]
    An \emph{ordering function} $\Order$ maps a set of transactions $\mathcal{T} = \{tx_1, \ldots, tx_m\}$ to a permutation $\sigma \in S_m$ determining their execution sequence within a block.
    \end{definition}

    \begin{definition}[Ordering-Based MEV]
    \label{def:mev}
    Let $\sigma^*$ denote the execution order chosen by a strategic producer $\mathcal{P}$ over the executable opening set of a slot, and let $\sigma_0$ denote the protocol-prescribed order. The \emph{ordering-based MEV} extracted by $\mathcal{P}$ is:
    $$\MEV(\mathcal{P}, \mathcal{T}) = \mathsf{Profit}(\mathcal{P}, \sigma^*) - \mathsf{Profit}(\mathcal{P}, \sigma_0)$$
    where $\mathsf{Profit}(\mathcal{P}, \sigma)$ is the net economic gain to $\mathcal{P}$ from executing $\mathcal{T}$ in order $\sigma$, including any producer-controlled identities and side transactions.
    \end{definition}

    \subsection{Security Properties}

    We define three properties that collectively eliminate unilateral proposer discretion over admitted transactions:

    \begin{definition}[Order-Unpredictability]
    \label{def:order-unpredict}
    A fair-ordering protocol satisfies \emph{order-unpredictability} if, for any PPT adversary $\Adv$ controlling the block producer, the probability that $\Adv$ can predict the position of any target executable transaction $tx_i$ in the final ordering, given all information available before the admissible commitment set is locked, satisfies:
    $$\Pr\bigl[\Adv \text{ predicts } \sigma(i)\bigr] \leq \frac{1}{m} + \negl(\lambda)$$
    where $m$ is the number of executable openings in the slot.
    \end{definition}

    \begin{definition}[Commitment Authenticity]
    \label{def:commit-unforge}
    A fair-ordering protocol satisfies \emph{commitment authenticity} if, for any PPT adversary $\Adv$, the probability that $\Adv$ produces an admissible commitment for a registered identity $\idcom^*$ without possessing the corresponding authentication signing key is:
    \begin{align*}
    \Pr\bigl[
    \Adv \text{ outputs an admissible commitment for } \idcom^* \\
    \text{ without } sk^{\mathsf{auth}}_*
    \bigr]
    \leq \negl(\lambda)
    \end{align*}
    \end{definition}

    \begin{definition}[Accountable Inclusion]
    \label{def:censor-resist}
    A fair-ordering protocol satisfies \emph{accountable inclusion} if any transaction that has both an admissible commitment and an executable opening must appear in every valid finalized block for that slot, and any omission can be proven by a publicly verifiable omission proof derived from $(R_i^C, R_i^O)$.
    \end{definition}

    \begin{definition}[Elimination of Proposer-Controlled Ordering MEV]
    A protocol \emph{eliminates proposer-controlled ordering MEV on admitted transactions} if it simultaneously satisfies order-unpredictability, commitment authenticity, and accountable inclusion.
    \end{definition}

    % ============================================================
    \section{The MEV-ACE Protocol}
    \label{sec:protocol}
    % ============================================================

    MEV-ACE operates in four logical steps: \textsc{Register}, \textsc{Commit}, \textsc{Order}, and \textsc{Open}. Registration occurs off the slot critical path; the latter three phases execute within one slot.

    \subsection{Protocol Overview}

    The protocol combines three ideas:
    \begin{enumerate}
        \item \textbf{Authenticated admissibility}: a commitment is not merely a hash; it must be signed by a registered identity and acknowledged by a threshold of validators before it enters the admissible set.
        \item \textbf{Delayed randomness}: the execution permutation is derived only after the admissible set is locked, removing the producer's ability to pick positions after learning transaction contents.
        \item \textbf{Receipt-backed inclusion}: both commitments and openings carry portable receipt bundles, so omission can be proven to any validator rather than inferred only by local observation.
    \end{enumerate}

    Together, these mechanisms constrain the producer in two dimensions: it cannot cheaply fabricate additional admissible demand, and it cannot silently omit a transaction that has already crossed the protocol's admission thresholds.

    \subsection{Registration}

    Before participating, user $u_i$ derives an authentication key pair
    $$
    (sk_i^{\mathsf{auth}}, vk_i^{\mathsf{auth}})
    =
    \mathsf{DeriveSigKey}_{\mathsf{mev}}(\REV_i),
    $$
    computes $\idcom_i = H(vk_i^{\mathsf{auth}})$, and registers $(\idcom_i, vk_i^{\mathsf{auth}}, D_{\mathsf{stake}})$ on-chain. Registration activates the identity only after the bond is locked, and the bond remains slashable for non-opening and invalid protocol behavior.

    Each active identity may submit at most $\ell$ commitments per slot. This quota is protocol-enforced and is central to the anti-stuffing analysis in Section~\ref{sec:sybil-cost}.

    \subsection{Phase 1: Commit}
    \label{sec:commit-phase}

    Each user $u_i$ wishing to include a transaction $tx_i$ in slot $s$ performs:

    \begin{enumerate}
        \item Sample a fresh opening nonce $r_i$.
        \item Compute the commitment:
        $$c_i = H(tx_i \| r_i \| \idcom_i \| s)$$
        \item Authenticate the commitment:
        $$
        \sigma_i^C = \Sign\bigl(sk_i^{\mathsf{auth}},\; (\texttt{"commit"}, \idcom_i, c_i, s)\bigr)
        $$
        \item Broadcast $(\idcom_i, c_i, s, \sigma_i^C)$ to validators.
    \end{enumerate}

    A validator $v_j$ receiving the commitment verifies:
    \begin{enumerate}
        \item $\idcom_i$ is registered and active;
        \item $\sigma_i^C$ verifies under the registered $vk_i^{\mathsf{auth}}$;
        \item the identity has not exceeded the per-slot quota $\ell$.
    \end{enumerate}

    If all checks succeed, the validator signs a receipt
    $$
    \tau_{i,j}^C = \Sign\bigl(sk_j^{\mathsf{val}},\; (\texttt{"commit"}, \idcom_i, c_i, s)\bigr).
    $$
    Once the user or network aggregates $q_c$ such receipts, the commitment becomes admissible and carries certificate $R_i^C$.

    \begin{algorithm}
    \caption{Commit Phase (User $u_i$)}
    \label{alg:commit}
    \begin{algorithmic}
        \REQUIRE Transaction $tx_i$, slot number $s$, auth key $sk_i^{\mathsf{auth}}$
        \STATE Sample fresh nonce $r_i$
        \STATE $c_i \gets H(tx_i \| r_i \| \idcom_i \| s)$
        \STATE $\sigma_i^C \gets \Sign(sk_i^{\mathsf{auth}}, (\texttt{"commit"}, \idcom_i, c_i, s))$
        \STATE Broadcast $(\idcom_i, c_i, s, \sigma_i^C)$ and collect $R_i^C$
    \end{algorithmic}
    \end{algorithm}

    The commit phase ends at cutoff $\Delta_c$. Only commitments with valid receipt bundles $R_i^C$ collected before this cutoff belong to the slot's admissible set.

    \subsection{Phase 2: Order}
    \label{sec:order-phase}

    After the commitment cutoff, the producer constructs the admissible commitment set
    $$
    \mathcal{C}_s^* = \{(\idcom_i, c_i, R_i^C)\}
    $$
    containing every commitment certificate valid for slot $s$. The set is canonicalized by sorting on $(\idcom_i, c_i)$.

    The ordering then proceeds as follows:

    \begin{enumerate}
        \item Compute the admissible-set root
        $$
        \mathcal{C}_{s,\mathsf{root}} = \mathsf{MerkleRoot}(\mathsf{Sort}(\mathcal{C}_s^*)).
        $$
        \item Evaluate the VDF on
        $$
        x_s = H(\mathsf{prev\_block\_hash} \| \mathcal{C}_{s,\mathsf{root}} \| s)
        $$
        to obtain $(\mathsf{seed}_s, \pi_{\VDF,s}) = \VDF.\mathsf{Eval}(\mathsf{pp}, x_s)$.
        \item Derive the execution permutation
        $$
        \sigma_s = \pi_{\mathsf{seed}_s} \in S_m
        $$
        via a seeded Fisher--Yates shuffle over the $m = |\mathcal{C}_s^*|$ admissible commitments.
        \item Publish $(\mathcal{C}_{s,\mathsf{root}}, \mathsf{seed}_s, \pi_{\VDF,s}, \sigma_s)$ in the block proposal together with the certified admissible set.
    \end{enumerate}

    \begin{algorithm}
    \caption{Order Phase (Block Producer)}
    \label{alg:order}
    \begin{algorithmic}
        \REQUIRE Certified admissible set $\mathcal{C}_s^*$, previous block hash $h_{\mathsf{prev}}$
        \STATE $\mathcal{C}_{s,\mathsf{root}} \gets \mathsf{MerkleRoot}(\mathsf{Sort}(\mathcal{C}_s^*))$
        \STATE $x_s \gets H(h_{\mathsf{prev}} \| \mathcal{C}_{s,\mathsf{root}} \| s)$
        \STATE $(\mathsf{seed}_s, \pi_{\VDF,s}) \gets \VDF.\mathsf{Eval}(\mathsf{pp}, x_s)$
        \STATE $\sigma_s \gets \mathsf{FisherYates}(\mathsf{seed}_s, |\mathcal{C}_s^*|)$
        \STATE Publish $(\mathcal{C}_{s,\mathsf{root}}, \mathsf{seed}_s, \pi_{\VDF,s}, \sigma_s)$
    \end{algorithmic}
    \end{algorithm}

    \paragraph{Why VDF, not a simple hash?} If the seed were a plain hash of the admissible-set root, the producer could learn the final permutation immediately once the set is locked and then decide which of its own committed transactions to open. The VDF inserts a mandatory sequential delay between admissible-set lock and permutation availability, forcing commitment decisions to be made before the order is known.

    \subsection{Phase 3: Open}
    \label{sec:open-phase}

    After the producer publishes the ordering material, users reveal transaction openings:

    \begin{enumerate}
        \item Broadcast $(tx_i, r_i, \idcom_i, s)$ for each admissible commitment they wish to open.
        \item Validators verify $c_i \stackrel{?}{=} H(tx_i \| r_i \| \idcom_i \| s)$ against the certified admissible commitment.
        \item If valid, validators issue opening receipts
        $$
        \tau_{i,j}^O = \Sign\bigl(sk_j^{\mathsf{val}},\; (\texttt{"open"}, \idcom_i, c_i, s)\bigr),
        $$
        and any opening with at least $q_o$ such receipts obtains certificate $R_i^O$.
        \item The producer executes every transaction with a valid opening certificate in the order induced by $\sigma_s$, skipping unopened commitments and applying the user-side non-opening penalty.
    \end{enumerate}

    \begin{algorithm}
    \caption{Open Phase (User $u_i$ and Validators)}
    \label{alg:open}
    \begin{algorithmic}
        \REQUIRE Opening $(tx_i, r_i, \idcom_i, s)$, certified commitment $(\idcom_i, c_i, R_i^C)$
        \STATE \textbf{User:} Broadcast $(tx_i, r_i, \idcom_i, s)$
        \STATE \textbf{Validator:} Verify $c_i = H(tx_i \| r_i \| \idcom_i \| s)$
        \STATE \textbf{Validator:} Issue opening receipt and aggregate $R_i^O$
        \STATE \textbf{Producer:} Execute every certified opening in order $\sigma_s$
    \end{algorithmic}
    \end{algorithm}

    \paragraph{Non-opening penalties.} If an admissible commitment does not receive a valid opening certificate $R_i^O$ before the opening cutoff, the corresponding identity is slashed by a fraction $\delta_{\mathsf{user}}$ of its bond for that unopened commitment. This makes selective non-opening an explicitly priced deviation rather than a free option.

    \paragraph{Omission proofs.} If a block omits an admissible commitment or an executable opening, any participant may present an omission proof consisting of the relevant certificate bundle(s): for commit omission, $(\idcom_i, c_i, s, R_i^C)$; for execution omission, $(\idcom_i, c_i, s, R_i^C, tx_i, r_i, R_i^O)$. Validators verify these objects directly and reject the block if the omission proof is valid.

    \subsection{Timing and Slot Structure}

    Within a single slot of duration $\Delta$, the protocol allocates time to commitment admission, delayed randomness, and opening:

    \begin{center}
    \begin{tabularx}{\columnwidth}{@{}llX@{}}
    \toprule
    \textbf{Phase} & \textbf{Duration} & \textbf{Action} \\
    \midrule
    Commit & $[0, \Delta_c)$ & Users broadcast commitments and collect $R_i^C$ \\
    VDF computation & $[\Delta_c, \Delta_c + \Delta_v)$ & Producer evaluates delayed randomness on $\mathcal{C}_s^*$ \\
    Open & $[\Delta_c + \Delta_v, \Delta)$ & Users reveal and collect $R_i^O$ \\
    \bottomrule
    \end{tabularx}
    \end{center}

    The timing constraint is
    $$
    \Delta = \Delta_c + \Delta_v + \Delta_o + \Delta_{\mathsf{margin}},
    $$
    where $\Delta_o$ denotes the minimum opening/verification budget and $\Delta_v$ must exceed the maximum lead by which a producer can know the final admissible-set root before commit cutoff. This separates the \emph{security requirement} on delayed randomness from any one implementation's raw CPU benchmark.

    \paragraph{Single-slot execution.} Unlike multi-block commit-reveal constructions, MEV-ACE can complete all critical steps in one slot when the chain's slot budget is sized to accommodate $\Delta_c$, $\Delta_v$, and $\Delta_o$. On faster chains, the same logic can be preserved by moving delayed randomness or receipt aggregation to a pipelined implementation.

    % ============================================================
    \section{Security Analysis}
    \label{sec:security}
    % ============================================================

    We analyze the three properties from Section~\ref{sec:model} and then derive the producer's incentive constraints.

    \subsection{Order-Unpredictability}

    \begin{theorem}[Order-Unpredictability]
    \label{thm:order-unpredict}
    Assume that: (i) the admissible commitment set $\mathcal{C}_s^*$ is fixed before the VDF output $\mathsf{seed}_s$ becomes available; (ii) the VDF satisfies the sequential computation assumption; and (iii) $H$ is modeled as a random oracle. Then MEV-ACE satisfies order-unpredictability (Definition~\ref{def:order-unpredict}).
    \end{theorem}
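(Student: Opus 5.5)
The plan is a game-hopping argument that reduces any prediction advantage to an adversary that breaks the VDF's sequentiality. Everything the adversary sees before the admissible set is locked forms a view $\mathsf{view}_\Adv$. This view is generated before time $\Delta_c$ and, by assumption (i), before $\mathsf{seed}_s$ is available. After the lock, the set $\mathcal{C}_s^*$, and hence $\mathcal{C}_{s,\mathsf{root}}$, $x_s$ and $m$, are fixed values.

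\textbf{Game 0} is the real experiment. The adversary outputs a guess $p$ for $\sigma_s(i)$, where $\sigma_s = \mathsf{FisherYates}(\mathsf{seed}_s, m)$ and $\mathsf{seed}_s$ is the first component of $\VDF.\mathsf{Eval}(\mathsf{pp}, x_s)$.

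\textbf{Game 1} replaces $\mathsf{seed}_s$ by a uniformly random string, independent of $\mathsf{view}_\Adv$. To bound the gap between Games 0 and 1, we build a distinguisher $\mathcal{B}$ against the VDF.
\begin{itemize}
\item $\mathcal{B}$ simulates the protocol up to the lock and receives the challenge input $x_s$.
\item $\mathcal{B}$ runs $\Adv$ using fewer than $T$ sequential steps, since the prediction must be made before $\Delta_v$ elapses.
\item $\mathcal{B}$ then uses the shuffle of a candidate seed to decide whether the adversary's guess is correct.
\end{itemize}
By the sequential computation assumption (ii), read in its standard pseudorandomness form (the output is indistinguishable from random for any algorithm running in time $<T$), the distinguishing advantage is $\negl(\lambda)$.

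Two details need care here. First, $x_s = H(\mathsf{prev\_block\_hash} \| \mathcal{C}_{s,\mathsf{root}} \| s)$ is a random-oracle output on a fresh input, by (iii). So $x_s$ is a uniformly random VDF challenge, and the adversary cannot have precomputed $\VDF.\mathsf{Eval}$ on it. Otherwise, grinding over $\mathcal{C}_s^*$ before the lock could yield a precomputed seed. Second, if the raw VDF output is not itself uniform, we model the final seed as $H$ applied to the VDF output. In the random oracle model, the seed is then uniform unless the adversary has queried the oracle on the VDF output, which requires computing that output.

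In Game 1, we show $\Pr[\sigma_s(i) = p] = 1/m$ for every fixed $p$. A seeded Fisher--Yates shuffle driven by uniform randomness yields a uniformly random permutation in $S_m$, up to the statistical bias from reducing random words modulo $j \le m$. That bias is $\negl(\lambda)$ when each draw uses $\lambda$ extra bits. The marginal $\sigma_s(i)$ of a uniform permutation is uniform on $[m]$, and it is independent of $\mathsf{view}_\Adv$. Hence any guess succeeds with probability $1/m$. Combining the two games gives $1/m + \negl(\lambda)$.

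The main obstacle is the first hop. The paper states sequentiality only as ``cannot evaluate faster than $T$ steps,'' which concerns computing the output, not predicting any function of it. I would strengthen the assumption explicitly to the standard VDF unpredictability notion, or obtain it through the random-oracle post-hashing described above. Unpredictability of a single position, which is all the definition requires, then follows by a guessing reduction. We must also handle the adversary adaptively choosing the set, and hence $m$ and $x_s$. A union bound over the polynomially many $H$-queries made during grinding covers this, since each query gives a fresh uniform challenge.
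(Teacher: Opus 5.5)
Your proposal is correct and follows the paper's route. The seed cannot be used before the admissible set is locked; it is effectively uniform once $x_s$ is fixed via the random-oracle hash and the sequential VDF; and a uniformly seeded Fisher--Yates shuffle then places $tx_i$ at any given position with probability $1/m$ up to negligible terms. Your two-game version is more careful than the paper, which simply asserts the middle step: you rightly note that sequentiality only forbids \emph{computing} the VDF output, so uniformity of $\mathsf{seed}_s$ needs either a pseudorandomness assumption or post-hashing in the random-oracle model, though what actually defeats grinding over candidate sets is the timing assumption (i) (no candidate input is known $T$ or more sequential steps before the lock), not the freshness of $x_s$ alone.
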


    \begin{proof}
    Let $tx_i$ be any executable transaction in slot $s$. The final execution permutation is
    \begin{align*}
    \sigma_s &= \pi_{\mathsf{seed}_s}, \\
    \mathsf{seed}_s
    &= \VDF.\mathsf{Eval}\bigl(
    \mathsf{pp},
    H(\mathsf{prev\_block\_hash} \| \mathcal{C}_{s,\mathsf{root}} \| s)
    \bigr).
    \end{align*}
    By assumption, $\mathcal{C}_{s,\mathsf{root}}$ is not fixed until the admissible commitment set is locked, and $\mathsf{seed}_s$ is unavailable before that lock. Hence any strategy that conditions commitment decisions on the realized position of $tx_i$ cannot use the actual final seed.

    Once $\mathcal{C}_s^*$ is fixed, the seed is derived from a random-oracle hash and a sequentially evaluated VDF output. Therefore, before $\mathsf{seed}_s$ is released, $\Adv$ gains no non-negligible advantage over guessing the position of $tx_i$ in the Fisher--Yates permutation. The position is uniformly distributed over the $m$ executable openings, so
    $$\Pr[\Adv \text{ predicts } \sigma(i)] = \frac{1}{m} + \negl(\lambda).$$
    \end{proof}

    \subsection{Commitment Authenticity}

    \begin{theorem}[Commitment Authenticity]
    \label{thm:commit-unforge}
    Under EUF-CMA security of the authentication signature scheme and collision resistance of $H$, MEV-ACE satisfies commitment authenticity (Definition~\ref{def:commit-unforge}).
    \end{theorem}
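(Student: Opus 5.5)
The proof goes by reduction to the EUF-CMA game. Let $\Adv$ be a PPT adversary that, with non-negligible probability $\epsilon$, outputs an admissible commitment $(\idcom^*, c^*, s^*, \sigma^*)$ for a registered identity $\idcom^*$ whose signing key it does not hold. We build a forger $\mathcal{B}$ against the authentication signature scheme with advantage close to $\epsilon$.

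\textbf{Setup of $\mathcal{B}$.} On input a challenge verification key $vk$, $\mathcal{B}$ guesses which of the polynomially many registered honest identities is the target and embeds $vk$ as that identity's $vk^{\mathsf{auth}}_*$, setting $\idcom^* = H(vk)$. This loses at most a factor $1/n$. All other identities, and all validator keys not controlled by $\Adv$, are generated by $\mathcal{B}$ honestly. Whenever the simulated honest user must sign a message $(\texttt{"commit"}, \idcom^*, c, s)$, $\mathcal{B}$ queries its signing oracle. Openings use only $tx$ and $r$, which $\mathcal{B}$ samples itself, and validator receipts use keys $\mathcal{B}$ holds. Hence the simulation is perfect.

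\textbf{Key step: the output is a fresh forgery.} By condition (2) of admissibility, $\sigma^*$ must verify under the registered key bound to $\idcom^*$ on $m^* = (\texttt{"commit"}, \idcom^*, c^*, s^*)$.

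We first invoke collision resistance of $H$. Registration binds $\idcom^*$ to a unique on-chain $vk$. If $\Adv$ instead makes $\sigma^*$ verify under some $vk' \neq vk$ with $H(vk') = \idcom^*$, then $(vk, vk')$ is a SHA-256 collision. So, except with negligible probability, $\sigma^*$ verifies under the embedded challenge key $vk$.

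Next we split on whether $m^*$ was ever queried to the oracle:
\begin{itemize}
\item If $m^*$ was never queried, $(m^*, \sigma^*)$ is a valid EUF-CMA forgery.
\item If $m^*$ was queried, the honest user really did sign that commitment. Then $\Adv$ has only replayed an authentic commitment, which is not an attack on authenticity. We formalize the winning condition of Definition~\ref{def:commit-unforge} as producing an admissible commitment on a message the honest key holder never signed. The same exclusion is standard in EUF-CMA, and without it the definition would be trivially violated by replay.
\end{itemize}

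\textbf{Main obstacle.} The hardest part is interpreting the informal phrase ``without possessing $sk^{\mathsf{auth}}_*$'' as this freshness condition, and making sure the threshold receipts $R^C$ give the adversary no extra signing power. Receipts are validator signatures under independent keys $sk^{\mathsf{val}}_j$, so they never produce a user signature. Also, since $q_c \ge 2f+1$, at least one honest validator must have checked $\sigma^*$ against the registered key. Thus even a Byzantine coalition of $f$ validators cannot certify a commitment with an invalid user signature.

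\textbf{Conclusion.} Combining the cases, $\epsilon \le n\cdot \mathsf{Adv}^{\mathsf{euf\text{-}cma}}_{\mathcal{B}} + \mathsf{Adv}^{\mathsf{cr}}_H + \negl(\lambda)$, which is negligible.
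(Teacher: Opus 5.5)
Your proof is correct, and its core is the paper's argument: condition (2) of admissibility forces $\sigma^*$ to verify under the registered key, so an adversary without that key yields an EUF-CMA forgery. The paper asserts this in one sentence. You turn it into an actual reduction (target guessing with a factor-$n$ loss, signing-oracle simulation) and you state the freshness restriction explicitly. That restriction is not cosmetic. Read literally, Definition~\ref{def:commit-unforge} is won by replaying an honest user's broadcast commitment, so the paper's ``directly yields an EUF-CMA forgery'' tacitly assumes the winning condition you write down.

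The real divergence is the role of collision resistance. The paper uses it for binding at opening time: a certified $c_i$ cannot be opened to two different pairs $(tx_i, r_i)$. That lies outside the winning condition of Definition~\ref{def:commit-unforge}, but it is what stops a producer from choosing its transaction contents after seeing $\sigma_s$. You use it instead to exclude a second key $vk' \neq vk$ with $H(vk') = \idcom^*$. Authenticity itself needs that step whenever the registry does not separately enforce uniqueness of $\idcom$. So your version proves the defined property more faithfully, but it omits the binding guarantee the paper bundles in.

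Finally, you can drop your remark about receipts. Condition (2) is a property of the tuple itself, so no honest validator needs to have checked anything. Invoking one would also import an honest-majority assumption that this theorem does not make.
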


    \begin{proof}
    For a commitment to be admissible, validators must verify a valid signature
    $$
    \sigma_i^C = \Sign(sk_i^{\mathsf{auth}}, (\texttt{"commit"}, \idcom_i, c_i, s))
    $$
    under the registered verification key $vk_i^{\mathsf{auth}}$ whose hash equals $\idcom_i$. Therefore an adversary that produces an admissible commitment for identity $\idcom_i$ without controlling $sk_i^{\mathsf{auth}}$ directly yields an EUF-CMA forgery against the signature scheme.

    Collision resistance of $H$ ensures that, after admission, the user cannot later open the same commitment to two distinct tuples $(tx_i, r_i)$ and $(tx_i', r_i')$ with the same $\idcom_i$ and slot number. Thus the commitment is both authenticated at admission time and binding at opening time. The combined failure probability is negligible.
    \end{proof}

    \subsection{Sybil Resistance via Identity Cost}
    \label{sec:sybil-cost}

    \begin{lemma}[Bounded Stuffing Cost]
    \label{lem:sybil}
    Let $\ell$ be the maximum number of commitments allowed per identity per slot. To inject $k$ additional admissible commitments in a slot beyond its currently registered capacity, an adversary must control at least $\lceil k / \ell \rceil$ additional active identities and therefore lock at least $\lceil k / \ell \rceil \cdot D_{\mathsf{stake}}$ of slashable capital.
    \end{lemma}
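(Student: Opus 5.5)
The argument is a pigeonhole count resting on the admissibility conditions and on Theorem~\ref{thm:commit-unforge}. The strategy is to show that every admissible commitment the adversary injects has to be charged to an active registered identity whose authentication key the adversary holds, and that no identity can absorb more than $\ell$ charges in the slot.

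\textbf{Step 1 (attribution).} Fix slot $s$. Every admissible commitment $(\idcom_i, c_i, s, \sigma_i^C)$ carries a signature $\sigma_i^C$ that verifies under a registered, active $vk_i^{\mathsf{auth}}$ with $H(vk_i^{\mathsf{auth}}) = \idcom_i$. It also carries a receipt bundle $R_i^C$ with $q_c \ge 2f+1$ validator signatures. I split the adversary's injected commitments into two classes.
\begin{itemize}
\item \emph{Own identities.} These are commitments attributed to identities whose $sk^{\mathsf{auth}}$ the adversary holds.
\item \emph{Foreign identities.} These are commitments attributed to identities whose key it does not hold. By Theorem~\ref{thm:commit-unforge} this class is empty except with probability $\negl(\lambda)$.
\end{itemize}
So, up to negligible probability, the $k$ additional commitments all sit on identities the adversary controls. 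By the phrase ``beyond its currently registered capacity'' I mean that the quota of the adversary's existing identities is already used up. The $k$ commitments must therefore land on \emph{additional} active identities.

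\textbf{Step 2 (quota bound).} Each validator enforces the per-slot quota $\ell$ before it signs a receipt. The harder point is that a Byzantine validator, or a set of them, could sign receipts beyond the quota. For this I use a quorum-intersection argument. Take $\ell+1$ admissible commitments from one identity in slot $s$. Each has $\ge 2f+1$ receipts, so each has at least $f+1$ receipts from honest validators. Intersection alone does not make one honest validator sign all $\ell+1$. The clean fix is to read the quota as a clause of Definition (Admissible Commitment), item~3: it is checked against the certified set $\mathcal{C}_s^*$, and any validator verifying a block rejects it when some identity appears more than $\ell$ times. I will state this reading explicitly. With it, each identity contributes at most $\ell$ admissible commitments per slot.

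\textbf{Step 3 (counting and capital).} If the $k$ commitments are spread over $t$ additional identities, then $k \le t\ell$, so $t \ge \lceil k/\ell \rceil$. Registration activates an identity only after a bond $D_{\mathsf{stake}}$ is locked. Hence the adversary must lock at least $\lceil k/\ell\rceil \cdot D_{\mathsf{stake}}$, and this bond is slashable by the registration rules.

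I expect Step 2 to be the main obstacle, namely ruling out quota evasion through Byzantine receipts. I would first try to make it go through via the block-level quota check. If that does not suffice, I would fall back on assuming that honest validators share a consistent per-slot view of the commitments each identity has submitted, which the BFT delivery assumption supplies.
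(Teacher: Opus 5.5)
Your proposal is correct and follows the paper's argument: each identity can contribute at most $\ell$ admissible commitments per slot, so a pigeonhole count forces $\lceil k/\ell\rceil$ additional identities, each activated only once a bond $D_{\mathsf{stake}}$ is locked. The paper simply takes the quota ``by protocol rule'' (item~3 of the admissibility definition) and leaves out your attribution step via Theorem~\ref{thm:commit-unforge} and your Byzantine-receipt concern; your reading of the quota as a check on the certified set is the same assumption the paper uses implicitly, so your extra steps make the argument more complete without changing its route.
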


    \begin{proof}
    By protocol rule, each active identity can contribute at most $\ell$ admissible commitments to a slot. Therefore $k$ extra admissible commitments require at least $\lceil k / \ell \rceil$ extra identities. Each such identity must be registered with an active bond $D_{\mathsf{stake}}$, so the minimum additional slashable capital is $\lceil k / \ell \rceil \cdot D_{\mathsf{stake}}$.
    \end{proof}

    \begin{corollary}
    If $u$ of those extra commitments are later left unopened, the adversary incurs additional slashing of at least $u \cdot \delta_{\mathsf{user}} \cdot D_{\mathsf{stake}}$.
    \end{corollary}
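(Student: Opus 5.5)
The plan is to argue directly from the non-opening penalty rule in Phase~3 (Section~\ref{sec:open-phase}), using Lemma~\ref{lem:sybil} only to fix the setting. That lemma tells us the $k$ extra admissible commitments are held by extra active identities, each bonded with $D_{\mathsf{stake}}$. By the definition of admissibility, each such commitment carries a certificate $R_i^C$, and by the canonicalization in Phase~2 it belongs to $\mathcal{C}_s^*$. It is therefore publicly bound to a specific registered $\idcom_i$.

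First I will fix the $u$ commitments that are left unopened. Unopened here means they receive no valid $R_i^O$ before the opening cutoff. For each one, the protocol rule slashes the owning identity by $\delta_{\mathsf{user}}$ of its bond \emph{per unopened commitment}. So every unopened commitment contributes a penalty of $\delta_{\mathsf{user}} \cdot D_{\mathsf{stake}}$.

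Next I will sum over the $u$ commitments. The penalty is levied per commitment rather than per identity, so the total is $u \cdot \delta_{\mathsf{user}} \cdot D_{\mathsf{stake}}$. This holds even when several unopened commitments share one identity: the quota $\ell$ bounds how many commitments an identity can hold, which is exactly why Lemma~\ref{lem:sybil} needs $\lceil k/\ell\rceil$ identities, but it does not merge their penalties. The word ``additional'' is justified because these slashes are separate from the bond lock in Lemma~\ref{lem:sybil}. The bond is locked capital, whereas slashing actually destroys value.

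The main obstacle is a double-counting or bond-exhaustion issue. If $\ell\,\delta_{\mathsf{user}} > 1$, one identity's bond could be exhausted before all its penalties are applied, and the bound could fail. I would handle this in one of two ways. The first is to state the implicit parameter assumption $\ell\,\delta_{\mathsf{user}} \le 1$, so the bond always covers the per-commitment slashes. The second is to note that the slashing rule is applied per commitment and assume a registered bond can always absorb it. I would also briefly note that the adversary cannot evade the penalty by claiming a forged commitment. By Theorem~\ref{thm:commit-unforge}, the commitment is authentically attributable to the identity, and admissibility requires that identity to be active and slashable.
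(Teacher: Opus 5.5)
Your argument is correct and matches the paper's reasoning: the paper gives no separate proof, and the corollary follows directly from the per-commitment non-opening penalty in Phase~3, summed over the $u$ unopened commitments, which is also how Deviation~4 of Theorem~\ref{thm:nash} uses it. Your bond-exhaustion caveat is a real assumption the paper leaves unstated, not a flaw in your proof: the bound needs the fixed amount $\delta_{\mathsf{user}} D_{\mathsf{stake}}$ per commitment (not a fraction of the remaining bond) and needs each bond to absorb all of it, e.g.\ $\ell\,\delta_{\mathsf{user}} \le 1$, so stating this explicitly makes the corollary more precise.
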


    The key point is not that identity creation is impossible; it is that the protocol converts commitment stuffing from a nearly free deviation into a deviation that consumes additional bonded capacity and, under selective non-opening, incurs explicit slashing.

    \subsection{Accountable Inclusion}

    \begin{theorem}[Accountable Inclusion]
    \label{thm:censor-resist}
    Assume that: (i) $q_c, q_o \ge 2f+1$; (ii) omission proofs based on $(R_i^C, R_i^O)$ are delivered to honest validators before they finalize the slot; and (iii) the underlying BFT protocol has more than $2/3$ honest stake. Then MEV-ACE satisfies accountable inclusion (Definition~\ref{def:censor-resist}).
    \end{theorem}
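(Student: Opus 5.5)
The argument splits Definition~\ref{def:censor-resist} into its two clauses: (a) any valid omission proof is publicly verifiable and sound, meaning it cannot be fabricated for a transaction that was never admitted or opened; and (b) no block omitting such a transaction can be finalized. I will handle (a) with quorum intersection and signature unforgeability, and (b) with delivery assumption (ii) and the honest-majority assumption (iii).

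For (a), I first fix what an omission proof is. For an execution omission it is the tuple $(\idcom_i, c_i, s, R_i^C, tx_i, r_i, R_i^O)$. Any party can check it with public data alone:
\begin{itemize}
\item the registered $vk_i^{\mathsf{auth}}$ and the equation $\idcom_i = H(vk_i^{\mathsf{auth}})$;
\item the $q_c$ validator signatures in $R_i^C$ on $(\texttt{"commit"}, \idcom_i, c_i, s)$;
\item the hash equation $c_i = H(tx_i \| r_i \| \idcom_i \| s)$;
\item the $q_o$ validator signatures in $R_i^O$ on $(\texttt{"open"}, \idcom_i, c_i, s)$.
\end{itemize}
This gives public verifiability directly.

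For soundness, note that $q_c \ge 2f+1$ and at most $f$ validators are Byzantine. So at least $f+1$ signatures in $R_i^C$ come from honest validators, unless the adversary has forged an honest validator's signature, which has negligible probability under EUF-CMA of the validator scheme. Honest validators sign a commit receipt only after checking registration, $\sigma_i^C$, and the quota $\ell$. The commitment is therefore genuinely admissible, and Theorem~\ref{thm:commit-unforge} excludes forged user authentication.

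The same quorum argument applies to $R_i^O$: an honest validator checked the hash opening against the certified commitment. Collision resistance of $H$, already used in Theorem~\ref{thm:commit-unforge}, makes $(tx_i, r_i)$ the unique opening. The transaction that must be included is thus well defined, and conflicting proofs for the same $c_i$ cannot exist.

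For (b), suppose a block $B$ for slot $s$ omits $tx_i$ even though a valid pair $(R_i^C, R_i^O)$ exists. By assumption (ii), every honest validator receives the omission proof before finalizing. Per the protocol rule in the \emph{Omission proofs} paragraph, each honest validator then rejects $B$. Finalization in the BFT protocol needs a quorum of more than $2/3$ of the stake. With at most $f$ Byzantine validators out of $3f+1$, any such quorum contains at least $f+1$ honest validators, all of whom reject $B$. So $B$ cannot be finalized, and every valid finalized block for slot $s$ contains $tx_i$, placed according to $\sigma_s$. Commit-only omission, where the block drops $c_i$ from $\mathcal{C}_s^*$, is handled the same way using $R_i^C$ alone. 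It also changes $\mathcal{C}_{s,\mathsf{root}}$, which honest validators can detect when they recompute the root.

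The main obstacle is the soundness step in (a). It depends on honest validators running the commit and open checks atomically with respect to the slot's cutoff times, so that a receipt bundle collected late cannot be back-dated. I would address this by treating the cutoff as part of the signed message and the honest-validator rule, and then reducing a successful back-dated bundle to a forgery of an honest validator's signature. All remaining failure events (validator forgery, user forgery, hash collision) are negligible, and a union bound over them gives a negligible total error.
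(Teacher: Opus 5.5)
Your argument is correct and is essentially the paper's: the certificate bundles serve as publicly verifiable omission proofs for both commit and execution omission, and assumptions (ii) and (iii) ensure every honest validator rejects an omitting block, so that block cannot reach a finalizing quorum. The soundness layer in your part (a) is not needed for Definition~\ref{def:censor-resist}, and its back-dating fix would not work as stated: with $q_c = 2f+1$, a bundle can be completed after the cutoff using legitimate late signatures from the $f$ Byzantine validators, or simply withheld from the producer, with no forgery involved. Nothing in your part (b) depends on it, so the proof of the theorem stands.
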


    \begin{proof}
    Let transaction $tx_i$ have an admissible commitment certificate $R_i^C$ and an executable opening certificate $R_i^O$. Because both thresholds are at least $2f+1$ in a $3f+1$ validator set, each certificate includes signatures from at least $f+1$ honest validators.

    Suppose the producer omits the commitment entirely. Then any observer can present $(\idcom_i, c_i, s, R_i^C)$; all validators can verify the certificate independently and conclude that the commitment belonged to the admissible set. A block whose admissible-set root excludes that commitment is invalid.

    Suppose the producer includes the commitment but omits execution after a valid opening. Then any observer can present $(\idcom_i, c_i, s, R_i^C, tx_i, r_i, R_i^O)$. Validators verify the commitment hash, the opening certificate, and the canonical position of the transaction under $\sigma_s$. A block that does not execute this transaction is invalid.

    Under the assumed BFT quorum and timely delivery of omission proofs, honest validators reject any invalid block. Therefore every transaction with both valid certificates appears in every finalized block for the slot.
    \end{proof}

    \subsection{Producer Incentives}

    \begin{theorem}[Honest Execution as Producer Best Response]
    \label{thm:nash}
    Let $B_{\mathsf{prod}}$ denote the producer bond. Suppose the following inequalities hold for every slot:
    \begin{enumerate}
        \item \textbf{Invalid-block deviation bound:}
        $$
        \delta_{\mathsf{prod}} \cdot B_{\mathsf{prod}} > G_{\mathsf{invalid}},
        $$
        where $G_{\mathsf{invalid}}$ is the maximum one-slot gain from reordering or omitting mandatory transactions.
        \item \textbf{Selective non-opening bound:}
        $$
        \delta_{\mathsf{user}} \cdot D_{\mathsf{stake}} > G_{\mathsf{open}},
        $$
        where $G_{\mathsf{open}}$ is the maximum gain from withholding one admissible producer-controlled opening after observing the realized order.
        \item \textbf{Stuffing-cost bound:} for every $k \ge 1$,
        $$
        \Bigl\lceil \frac{k}{\ell} \Bigr\rceil D_{\mathsf{stake}} > G_{\mathsf{stuff}}(k),
        $$
        where $G_{\mathsf{stuff}}(k)$ is the incremental expected gain from injecting $k$ additional admissible commitments.
    \end{enumerate}
    Then, after the admissible commitment set is fixed, honest execution of the certified permutation is a best response for the producer.
    \end{theorem}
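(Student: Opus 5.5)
The plan is a one-shot deviation argument. With the admissible set $\mathcal{C}_s^*$ fixed, I compare the producer's payoff from honest execution of $\sigma_s$ with its payoff from each possible deviation. Each deviation's gain is then bounded by its enforced cost, using the three inequalities together with Theorem~\ref{thm:censor-resist} and Lemma~\ref{lem:sybil}.

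I would first fix the producer's action space after the lock into three exhaustive classes:
\begin{enumerate}
\item \emph{block-content deviations}: reordering relative to $\sigma_s$, or omitting a commitment or opening that carries certificates;
\item \emph{opening deviations}: withholding openings of producer-controlled admissible commitments after seeing $\sigma_s$;
\item \emph{set-composition deviations}: injecting $k$ extra admissible commitments, which must happen before the lock but are the only way to shape the set that is fixed.
\end{enumerate}
Any mixed strategy decomposes into these components. Since the bounds hold per slot, the one-slot payoff is additive in gains minus penalties, and I would compare component by component.

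\textbf{Class 1.} Theorem~\ref{thm:censor-resist} ensures that an omission or misordering of a certified transaction yields a publicly verifiable omission proof. The block is therefore rejected and the producer is slashed $\delta_{\mathsf{prod}} B_{\mathsf{prod}}$. The deviation payoff is at most $G_{\mathsf{invalid}} - \delta_{\mathsf{prod}} B_{\mathsf{prod}} < 0$, which is strictly below the honest baseline, normalized to $0$ relative gain.

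\textbf{Class 2.} Each withheld opening triggers a penalty $\delta_{\mathsf{user}} D_{\mathsf{stake}}$. Withholding $u$ openings gains at most $u\, G_{\mathsf{open}}$, because $G_{\mathsf{open}}$ is defined per opening, and costs $u\,\delta_{\mathsf{user}} D_{\mathsf{stake}}$ by the Corollary. The net is negative for every $u\ge1$.

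\textbf{Class 3.} By Lemma~\ref{lem:sybil}, injecting $k$ commitments costs at least $\lceil k/\ell\rceil D_{\mathsf{stake}}$ of locked capital, and this exceeds $G_{\mathsf{stuff}}(k)$. Summing, every deviation profile has strictly lower payoff than honest execution, so honest execution is a best response.

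The main obstacle is additivity and interaction between classes. A producer might, for instance, stuff commitments precisely to create more withholding options, so the gains are not independent. My first approach is to interpret each $G$ as a maximal marginal gain \emph{conditional on} any other deviations, so that each stated bound applies to the incremental step. Then a sequential telescoping argument works: add deviations one at a time, and each step changes payoff by less than zero.

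A second subtlety is that locked capital in Class 3 is not a loss, since the bond can be recovered. To be honest about it, I would either note that the paper treats the locked bond as the cost, matching the theorem's inequality, or combine it with Class 2: stuffed commitments that are all opened become executed, admitted transactions, which are subject to uniformly random positions by Theorem~\ref{thm:order-unpredict}, while unopened ones incur slashing.
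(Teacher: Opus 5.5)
Your case analysis follows the paper's route: omission and reordering are closed by Theorem~\ref{thm:censor-resist} and inequality~1, withholding by the per-opening slash, and stuffing by Lemma~\ref{lem:sybil} and inequality~3. The gap is your claim that the three classes are exhaustive. You miss a channel the paper treats as a separate deviation: corrupting the ordering material itself by publishing a fake VDF output.

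Theorem~\ref{thm:censor-resist} is an inclusion statement, and it presupposes that validators can recompute the correct $\sigma_s$. Consider a producer who publishes a self-chosen $\mathsf{seed}'_s$ and then executes faithfully according to $\pi_{\mathsf{seed}'_s}$. Either this lies outside your classes, or it falls under Class~1 only if validators can tell that the published seed is wrong. No certificate pair $(R_i^C, R_i^O)$ reveals that. What closes the channel is VDF soundness:
\begin{itemize}
\item except with negligible probability, no accepting proof exists for any output other than $\VDF.\mathsf{Eval}(\mathsf{pp},x_s)$;
\item a non-verifying proof makes the block invalid, so the deviation is again priced by $\delta_{\mathsf{prod}}B_{\mathsf{prod}} > G_{\mathsf{invalid}}$.
\end{itemize}
Your argument never invokes VDF soundness. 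Add this as a fourth case, or cite soundness explicitly inside Class~1 as the reason $\sigma_s$ is publicly recomputable.

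Two smaller points. First, your concern about interactions between deviations goes beyond the paper, which treats each channel in isolation. Your bound $u\,G_{\mathsf{open}}$ for withholding $u$ openings does not follow from the per-opening definition of $G_{\mathsf{open}}$ unless you adopt the conditional-marginal reading. That reading is therefore a strengthening of the hypotheses, which the paper also leaves implicit, not a consequence of them.

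Second, on the locked-capital issue, only your first option works. The fallback via Theorem~\ref{thm:order-unpredict} fails because uniformly random positions do not neutralize stuffing. Each extra opened commitment is another draw at landing next to a victim, which is exactly the dilution attack described in the introduction. So inequality~3 must be taken at face value, with the bonded capital counted as the cost, as the paper does.
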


    \begin{proof}
    Consider the producer's possible deviations.

    \textbf{Deviation 1: Reorder or omit mandatory transactions.} By Theorem~\ref{thm:censor-resist}, omission proofs and deterministic recomputation of $\sigma_s$ make the block invalid. The producer loses at least $\delta_{\mathsf{prod}} B_{\mathsf{prod}}$, which exceeds the gain cap $G_{\mathsf{invalid}}$ by assumption.

    \textbf{Deviation 2: Publish a fake VDF output.} VDF soundness makes successful forgery negligible, and an invalid VDF proof again triggers the invalid-block penalty. Expected gain is negative by the first inequality.

    \textbf{Deviation 3: Inject additional admissible commitments.} By Lemma~\ref{lem:sybil}, $k$ extra commitments require at least $\lceil k/\ell \rceil$ additional bonded identities. By the stuffing-cost inequality, this capital requirement exceeds the incremental expected gain $G_{\mathsf{stuff}}(k)$.

    \textbf{Deviation 4: Selectively withhold producer-controlled openings.} Each unopened admitted commitment loses at least $\delta_{\mathsf{user}} D_{\mathsf{stake}}$, which by assumption exceeds the maximum gain $G_{\mathsf{open}}$ from suppressing that opening.

    Since every profitable deviation channel is closed either cryptographically or economically, honest execution is a best response for the producer once the admissible set is fixed.
    \end{proof}

    \subsection{Security Boundaries and Limitations}
    \label{sec:limitations}

    We explicitly acknowledge the following limitations:

    \begin{itemize}
        \item \textbf{The guarantees apply only after receipt thresholds are met.} If a user fails to collect $R_i^C$ or $R_i^O$ before the relevant cutoff, the transaction never becomes mandatory for that slot.

        \item \textbf{Information-based and cross-domain MEV remain out of scope.} Back-running on public state changes and arbitrage across chains still require orthogonal mechanisms such as batch auctions or delayed disclosure~\cite{budish2015high}.

        \item \textbf{The protocol depends on timely dissemination of omission proofs.} If valid omission proofs do not reach honest validators before they vote, accountable inclusion becomes weaker in practice even though the proof objects are sound.

        \item \textbf{VDF hardware asymmetry remains a deployment risk.} If one party can compute the VDF materially faster than the calibrated bound, delayed randomness may no longer provide the intended uncertainty window.

        \item \textbf{Economic guarantees are parameter-sensitive.} If $D_{\mathsf{stake}}$, $\delta_{\mathsf{user}}$, or $\delta_{\mathsf{prod}}$ are set below realistic one-slot profit opportunities, selective non-opening or stuffing can become rational again.
    \end{itemize}

    % ============================================================
    \section{Performance Analysis}
    \label{sec:performance}
    % ============================================================

    The original version of this paper conflated cryptographic delay parameters, wall-clock runtime, and end-to-end slot overhead. We separate them here.

    \subsection{Computation and Communication Costs}

    Let $m$ be the number of admissible commitments in a slot and $n = 3f+1$ the validator count. MEV-ACE adds four main cost components:

    \begin{itemize}
        \item \textbf{User-side work:} each transaction adds one commitment hash, one opening nonce, and one authentication signature.
        \item \textbf{Validator-side work:} commit and open admission require $O(m)$ signature/hash checks over the slot.
        \item \textbf{Canonicalization:} constructing the admissible-set root requires an $O(m \log m)$ sort and $O(m)$ hashes.
        \item \textbf{Delayed randomness:} VDF evaluation is deployment-specific, while verification remains $O(\log T)$ or better depending on the construction.
    \end{itemize}

    Communication overhead arises from two additional user messages per transaction (commit and open) plus validator receipt certificates. In practice, this means MEV-ACE is bandwidth-sensitive unless receipt signatures are batched, aggregated, or compressed in the execution layer.

    \subsection{Latency Budget}

    The slot budget should be reasoned about as
    $$
    \Delta_{\mathsf{slot}} =
    \Delta_c + \Delta_v + \Delta_o + \Delta_{\mathsf{margin}},
    $$
    where:
    \begin{itemize}
        \item $\Delta_c$ is the time for users to obtain commit certificates $R_i^C$;
        \item $\Delta_v$ is the wall-clock delay introduced by the randomness mechanism;
        \item $\Delta_o$ is the time for users to obtain opening certificates $R_i^O$;
        \item $\Delta_{\mathsf{margin}}$ absorbs network jitter and consensus scheduling slack.
    \end{itemize}

    The key design rule is that $\Delta_v$ must exceed the producer's maximum timing advantage in learning the final admissible-set root. This is a \emph{security calibration}, not a CPU microbenchmark. A deployment can therefore be single-slot only if its chain-level slot duration is large enough to hold all three phases with margin.

    \subsection{Throughput Implications}

    MEV-ACE does not impose a fixed percentage throughput penalty independent of deployment. Its throughput impact depends on three variables:
    \begin{enumerate}
        \item the validator count and signature scheme used for receipts;
        \item whether receipt evidence is carried as individual signatures, aggregates, or committee certificates;
        \item the chain's networking budget for doubling user-plane messages from one to two phases.
    \end{enumerate}

    Consequently, a credible throughput claim requires an implementation benchmark with a concrete validator topology, signature scheme, aggregation strategy, and slot scheduler. This paper does not claim a universal 3--7\,ms overhead or a fixed TPS number.

    \subsection{Qualitative Comparison}

    Relative to plain commit-reveal, MEV-ACE adds two capabilities that materially change the security boundary: authenticated admissibility and omission proofs. Relative to threshold-encryption systems, it avoids a decryption committee but still requires a carefully calibrated delay budget. Relative to PBS, it targets the source of ordering discretion rather than merely reallocating extracted value between builders and proposers.

    % ============================================================
    \section{Discussion}
    \label{sec:discussion}
    % ============================================================

    \subsection{Why Identity Binding Is the Missing Piece}

    The core observation of this work is more precise than ``identity solves MEV.'' What matters is \emph{authenticated admissibility backed by slashable economic identities}. Fair ordering without authenticated admission still lets the producer manufacture demand; authenticated admission without delayed randomness still lets the producer condition reveals on realized order. MEV-ACE pairs these two controls and adds receipt-backed inclusion so omission becomes globally auditable.

    \subsection{Generalization Beyond ACE-GF}

    While MEV-ACE is presented in the context of ACE-GF, the design generalizes to any system that provides:
    \begin{enumerate}
        \item A registered verification key or equivalent public credential for protocol authentication.
        \item A slashable participation bond or other enforceable cost for expanding identity capacity.
        \item A way to dedicate protocol-specific signing material so that ordering authentication is isolated from other applications.
    \end{enumerate}

    ACE-GF is attractive because it supplies item~(3) cleanly and can instantiate the authentication key with post-quantum signatures. The security proofs in this paper, however, ultimately rely on registered keys, receipt thresholds, and bonded participation rather than on any ACE-GF-specific local wallet workflow.

    \subsection{Parameter Selection Guidelines}

    \begin{itemize}
        \item \textbf{Receipt thresholds $q_c, q_o$}: Set to at least $2f+1$ so that every certificate contains signatures from at least $f+1$ honest validators and can therefore support omission proofs.
        \item \textbf{Delay budget $\Delta_v$}: Choose $\Delta_v$ to exceed the producer's maximum timing advantage in learning the final admissible-set root, with additional margin for hardware asymmetry and network jitter.
        \item \textbf{Stake deposit $D_{\mathsf{stake}}$}: Size the user bond so that $\delta_{\mathsf{user}} D_{\mathsf{stake}}$ exceeds the maximum gain from withholding one admitted opening, and $\lceil k/\ell \rceil D_{\mathsf{stake}}$ exceeds the gain from stuffing $k$ extra commitments.
        \item \textbf{Producer bond $B_{\mathsf{prod}}$}: Size the producer-side slash so that $\delta_{\mathsf{prod}} B_{\mathsf{prod}}$ exceeds the maximum one-slot benefit from invalid ordering or omission.
        \item \textbf{Rate limit $\ell$}: Choose $\ell$ based on expected user activity. A larger $\ell$ improves UX for high-frequency users but weakens the capital cost of commitment stuffing.
    \end{itemize}

    % ============================================================
    \section{Related Work}
    \label{sec:related}
    % ============================================================

    \paragraph{MEV taxonomy and measurement.} Daian et al.~\cite{daian2020flash} introduced the concept of miner extractable value and documented front-running on Ethereum. Flashbots later published MEV-Explore as a practical measurement resource for extracted MEV~\cite{flashbots-explore}. Qin et al.~\cite{qin2022quantifying} quantified extractable value across major DeFi activity classes. Together, this line of work shows that ordering-based MEV is a structural problem in blockchain systems where transaction ordering affects economic outcomes.

    \paragraph{Commit-reveal and encryption-based approaches.} Canidio and Danos~\cite{canidio2023commit} analyze commit-reveal schemes against front-running. Shutter Network~\cite{shutter2023} uses threshold encryption to hide transaction contents from sequencers. These approaches reduce early information leakage, but by themselves they do not authenticate admissible commitments via bonded identities or provide portable omission proofs.

    \paragraph{Fair ordering services.} Kelkar et al.~\cite{kelkar2020order} formalized order-fairness and proposed Aequitas. Chainlink Fair Sequencing Services (FSS)~\cite{chainlink-fss} decentralizes ordering via oracle networks. These require external trust assumptions that MEV-ACE avoids.

    \paragraph{Proposer-builder separation.} Flashbots MEV-Boost~\cite{flashbots-mev-boost} separates block building from proposing in Ethereum. This is a redistribution mechanism (the builder still extracts MEV; the proposer receives a share), not an elimination mechanism.

    \paragraph{VDF-based protocols.} Boneh et al.~\cite{boneh2018vdf} formalized VDFs. Wesolowski~\cite{wesolowski2019efficient} provided efficient VDF constructions. MEV-ACE uses delayed randomness not as a standalone fairness primitive, but as one component in a larger admissibility-and-inclusion framework.

    \paragraph{Identity and Sybil resistance.} Proof-of-personhood protocols~\cite{borge2017proof} address Sybil resistance through identity verification. ACE-GF~\cite{acegf} provides a cryptographic identity framework with context-isolated key derivation. MEV-ACE differs from both strands by using registered, slashable protocol identities to authenticate commitments and openings in the block-building path.

    % ============================================================
    \section{Conclusion}
    \label{sec:conclusion}
    % ============================================================

    We have presented MEV-ACE, an identity-authenticated fair-ordering protocol for proposer-controlled MEV mitigation. The central claim is narrower and stronger than the original draft: once commitments and openings cross threshold receipt gates, the producer no longer retains unilateral discretion to reorder or omit them without either violating verifiable protocol rules or paying explicitly modeled economic penalties. This result follows from the combination of registered authentication keys, receipt-backed admissibility, delayed randomness, and slashable participation bonds.

    The paper's security conclusion is therefore conditional rather than rhetorical. MEV-ACE provides order-unpredictability, commitment authenticity, and accountable inclusion under standard cryptographic assumptions, timely dissemination of omission proofs, and correctly sized economic parameters. ACE-GF remains useful as the deterministic identity/key-derivation substrate and preserves compatibility with post-quantum signatures such as ML-DSA-44, but the enforceable security boundary is the protocol itself.

    \paragraph{Future work.} The next three priorities are: (1)~benchmarking concrete receipt and VDF implementations under realistic validator topologies; (2)~formalizing omission-proof handling inside a full BFT state machine; and (3)~studying how this framework composes with batch auctions or encrypted execution for information-based MEV.

    \bibliographystyle{plain}

\end{document}